\documentclass[11pt]{article}

\usepackage{amssymb,amsmath,amsthm,mathrsfs}
\newcommand{\R}{\mathbb{R}}

\newcommand{\pd}{\partial}
\newcommand{\N}{\mathbb{N}}

\newcommand{\Sp}{\mathbb{S}}

\newcommand{\Hilbert}{\mathcal{H}}

\newcommand{\supp}{\mathrm{supp}}

\newtheorem{Lemma}{Lemma}
\newtheorem{Theorem}{Theorem}

\newtheorem{Corollary}{Corollary}

\theoremstyle{definition}
\newtheorem{Remark}{Remark}
\newtheorem{ass}{Assumption}

\begin{document}

\title{\bf Estimates on trapped modes in deformed quantum layers}

\author{
Hynek Kova\v{r}{\'\i}k and Semjon Vugalter
}
\date{
\begin{center} {\small 
Institute of Analysis, Dynamics
and Modeling, Universit\"at Stuttgart, 
PF 80 11 40, D-70569  Stuttgart, Germany.
}
\end{center}
}

%

\maketitle

\begin{abstract}
\noindent We use the logarithmic Lieb-Thirring inequality for
two-dimensional Schr\"odinger operators and establish estimates on
trapped modes in geometrically deformed quantum layers.
\end{abstract}


\section{Introduction}
Trapped modes in quantum layers and waveguides have been intensively
studied in the last decades, see \cite{BEGK,BGRS,DE,DEK,ES,GJ} and
references therein. In these papers it has been shown that a
suitable geometrical perturbation of a waveguide (or a layer)
$\Omega$, such as local enlargement or bending, induces the existence
of discrete eigenvalues $E_j$ of the corresponding Laplace operator
\begin{equation*}
-\Delta_\Omega \quad \text{in} \quad L^2(\Omega)
\end{equation*}
with Dirichlet boundary conditions. These eigenvalues represent the
so called {\it trapped modes}, 
which are the main objects of our interest.
For mildly deformed waveguides and layers the corresponding weak coupling
behaviour of such eigenvalues have been established in
\cite{BEGK,BGRS,DE,DEK}.

The next step in the analysis of the above mentioned eigenvalues
consist of deriving suitable spectral estimates. In other words,
one would like to know not only that these eigenvalues exist, but also 
in which way they are linked to the deformation of $\Omega$, i.e.~how 
the distance of $E_j$ to the essential
spectrum of $-\Delta_\Omega$ depends on the perturbation. 
Such a connection can be formulated
in terms of certain Lieb-Thirring type inequalities,
which estimate the sums 
\begin{equation} \label{riesz}
\sum_j\, |E-E_j|^\gamma,\qquad E:= \inf\sigma_{ess}
(-\Delta_\Omega)\, , \quad \gamma \geq 0\, .
\end{equation}
In the case in which $\Omega$ is a quantum waveguide, these estimates
were proved in \cite{EW} for potential type perturbations and in
\cite{ELW} for geometrical perturbations and perturbations of the
boundary conditions. In the case of a quantum
layer with a potential perturbation, the corresponding inequality was
recently obtained in \cite{KVW}. All these estimates have the right
order of asymptotics for weak perturbations, i.e.~the respective
upper bounds on the sum \eqref{riesz} reflect the correct weak
coupling behaviour established in \cite{BEGK,BGRS,DE,Si}.

\vspace{0.15cm}

The aim of the present paper is to extend these results also to the
case of a geometrical deformation of a quantum layer. We note that
in the case of quantum waveguides the key ingredient of the proof of
an estimate, which has the correct asymptotical behaviour, was the
Lieb-Thirring inequality for one-dimensional Schr\"odinger operators
with the critical power $\gamma=\frac 12$ proved in \cite{W}. Since
a layer might be considered as a two-dimensional analog of a
waveguide, the key ingredient of our proof will be the corresponding
logarithmic critical Lieb-Thirring inequality for two-dimensional Schr\"odinger
operators, which was recently established in \cite{KVW}. Therefore we first
briefly recall the result of \cite{KVW}, see Theorem \ref{2dim}.
In section \ref{bulges} we then show how the problem can be
reduced to the spectral analysis of certain two-dimensional
Schr\"odinger operator with the effective potential induced by the
geometrical deformation of the layer.

\vspace{0.15cm}

\noindent Following notation will be adopted in the text. 
Given a Hilbert space $\Hilbert$ and a self-adjoint operator
$T$ in $\Hilbert$ we denote by $N_\Hilbert(T)$ the number of negative 
eigenvalues of $T$, counting their geometrical multiplicities. When necessary
we will use the symbols $\Delta_{x,y}, \, \nabla_{x,y}$ etc.~in order to
specify in which variables the respective operators act.


\section{Preliminaries}


\subsection{Quantum layers}

A quantum layer may be represented by an open domain 
$\Omega=\R^2\times (0,d)$, more precisely $\Omega:= \{x,y,z\in
\R^3\, :\, 0< z < d \}$, where $d$ is the width of $\Omega$. It will
be convenient to work with the shifted Laplace operator
\begin{equation}
A = -\Delta_{\Omega} -\frac{\pi^2}{d^2}\quad \text{in} \quad
L^2(\Omega)
\end{equation}
with the Dirichlet boundary conditions at $\pd\Omega$. The operator
$A$ is associated with closed quadratic form
\begin{equation} \label{form}
Q[u] = \int_{\Omega}\, \left(|\nabla u|^2 -\frac{\pi^2}{d^2}\,
  |u|^2\right)\, dx dy dz
\end{equation}
with the form domain $H^1_0(\Omega)$. It can be easily verified that
$$
\sigma_{ess}\, (A) =[0,\infty), \quad \sigma_{d}\, (A) = \emptyset\,
.
$$
As noted in \cite{BEGK}, a local enlargement of the width of the
layer will not affect the essential spectrum of $A$, but will lead
to the existence of negative discrete eigenvalues of $A$.
To find a suitable spectral estimate on these eigenvalues we need  the
two-dimensional logarithmic Lieb-Thirring inequality, which we
formulate in the next section.


\subsection{Two-dimensional Lieb-Thirring inequality}
Consider the Schr\"odinger operator
\begin{equation}
-\Delta - V \quad \text{in}\quad L^2(\R^2)\, ,
\end{equation}
where $V$ is a potential function decaying at infinity and such that
$\sigma_{ess}(-\Delta - V)=[0,\infty)$. Denote by $-\lambda_j$ the
negative eigenvalues of $-\Delta - V$ and introduce
the family of functions $F_s:(0,\infty)\to(0,1]$ defined by
\begin{equation}
\forall\, s>0 \qquad F_s(t) := \left\{
\begin{array}{l@{\quad \mathrm{} \quad }l}
 |\ln ts^2|^{-1} & 0< t \leq e^{-1}s^{-2}\, , \\
 &  \\
1 & t > e^{-1}s^{-2} \, .
\end{array}
\right.
\end{equation}
An upper bound on the sum
$$
\sum_j\, F_s(\lambda_j)
$$
in terms of intergals of $V$ 
has been recently found in \cite{KVW}. Its formulation requires
some additional notation. The space $L^1(\R_+,L^p(\Sp^1))$
is defined as the space of functions $f$ such that
\begin{equation} \label{L1p}
\|f\|_{L^1(\R_+, L^p(\Sp^1))} : = \int_0^{\infty}
\left(\int_0^{2\pi}
  |f(r,\theta)|^p\, d\theta\right)^{1/p}\, r\, dr \, < \infty \, ,
\end{equation}
where $(r,\theta)$ are the polar coordinates in $\R^2$.
Moreover, given an $s>0$ we introduce $B(s):=\{x\in \R^2\, :\, |x| < s \}$.
The result of \cite{KVW} then reads as follows:

\begin{Theorem} \label{2dim}
Let $V\geq 0$ and $V\in L^1_{loc}(\R^2,|\ln|x||\, dx)$. Assume that
$V\in L^1(\R_+,L^p(\Sp^1))$ for some $p>1$.
Then the eigenvalues $-\lambda_j$ satisfy the inequality
\begin{equation} \label{first-ineq}
\sum_j\, F_s(\lambda_j) \, \leq \, c_1\, \|V\, \ln
(|x|/s)\|_{L^1(B(s))}\, + c_p\, \|V\|_{L^1(\R_+, L^p(\Sp^1))}\,
\end{equation}
for all $s\in\R_+$. The constants $c_1$ and $c_p$ are independent of
$s$ and $V$.

\vspace{0.1cm}

\noindent
In particular, if $V(x) = V(|x|)$, then there exists a constant $c_4$, such that
\begin{equation} \label{radial}
\sum_j\, F_s(\lambda_j)  \, \leq \,  c_1\,
\|V\ln(|x|/s)\|_{L^1(B(s))}\, + c_4\, \|V\|_{L^1(\R^2)}
\end{equation}
holds true for all $s\in\R_+$.
\end{Theorem}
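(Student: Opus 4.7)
\emph{Proof strategy.}
The plan combines a CLR--type bound in two dimensions, which handles the ``large'' eigenvalues (those on which $F_s\equiv 1$), with a critical 1D Lieb--Thirring inequality for the $s$--wave sector, which produces the logarithmic weight in the remaining weak--coupling regime. Split $V=V_0+V_\perp$ with $V_0(r)=(2\pi)^{-1}\int_0^{2\pi}V(r,\theta)\,d\theta$, and decompose $L^2(\R^2)=\bigoplus_{m\in\mathbb{Z}}\Hilbert_m$ by angular momentum. After the Liouville substitution, $-\Delta$ on $\Hilbert_m$ is unitarily equivalent to $h_m=-\partial_r^2+(m^2-1/4)/r^2$ on $L^2(\R_+)$; the sectors $m\neq 0$ are subcritical via Hardy's inequality ($h_m\geq 3/(4r^2)$), whereas $h_0=-\partial_r^2-1/(4r^2)$ is exactly critical. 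This dichotomy is what underlies the two different terms on the right of \eqref{first-ineq}.

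For the eigenvalues $\lambda_j\geq e^{-1}s^{-2}$ I would invoke a 2D CLR--type bound of Birman--Solomyak form, $N_{L^2(\R^2)}(-\Delta-V)\leq c_p\|V\|_{L^1(\R_+,L^p(\Sp^1))}$, which simultaneously absorbs the contribution of $V_\perp$ and of the high--$m$ sectors and produces the second term of \eqref{first-ineq}. The critical task is then to bound
$$
\sum_{\lambda_j<e^{-1}s^{-2}}|\ln(\lambda_j s^2)|^{-1}\leq c_1\,\|V\ln(|x|/s)\|_{L^1(B(s))},
$$
for which I would restrict to the $s$--wave operator $h_0-V_0$, the sector where the weakly coupled bound states reside, and apply the change of variables $t=\ln(r/s)$ together with a gauge $g(r)=r^{1/2}\varphi(t)$ that removes the critical weight $-1/(4r^2)$. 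This recasts, up to the Jacobian $s^2 e^{2t}$, the eigenvalue problem as a 1D Schr\"odinger problem on $\R$ with effective potential proportional to $V_0(s e^t)$. Weidl's critical 1D Lieb--Thirring inequality with $\gamma=1/2$ then yields a bound whose right-hand side, re-expressed in the original variable $r$, is exactly the logarithmically weighted $L^1$ integral over $B(s)$.

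The main obstacle is the matching between $\sqrt{|E_j^{1D}|}$, the quantity provided by Weidl, and the weight $|\ln(\lambda_j s^2)|^{-1}$ required on the left-hand side of \eqref{first-ineq}. The logarithmic substitution produces a \emph{weighted} Sturm--Liouville eigenvalue problem (with spectral weight $e^{2t}$) rather than a standard 1D Schr\"odinger equation, so converting between the two spectral parameters requires an asymptotic comparison of the free 2D resolvent kernel $(2\pi)^{-1}K_0(\sqrt{\mu}\,|x-y|)\sim\ln(1/(\sqrt{\mu}\,|x-y|))$ against its 1D analogue, and it is this comparison that forces the logarithmic weight $|\ln(|x|/s)|$ to appear in the integrand of the first term. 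The radial case \eqref{radial} follows immediately: for $V=V(|x|)$ one has $V_\perp\equiv 0$ and $\|V\|_{L^1(\R_+,L^p(\Sp^1))}=(2\pi)^{1/p-1}\|V\|_{L^1(\R^2)}$ is a fixed multiple of $\|V\|_{L^1(\R^2)}$ for any $p>1$, yielding the absolute constant $c_4$.
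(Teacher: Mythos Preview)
The paper does not prove Theorem~\ref{2dim}; it merely recalls it from \cite{KVW} as a prerequisite for the main results on layers. There is therefore no proof in the present paper to compare your proposal against.

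As for the proposal itself, the overall architecture---angular-momentum decomposition, a CLR/Birman--Solomyak bound for the non-zero modes and the ``large'' eigenvalues, and a separate critical-threshold argument in the $s$--wave channel---is the right one and is indeed the skeleton of the argument in \cite{KVW}. The gap you yourself flag is real, however, and is not just a technicality. After the logarithmic substitution $t=\ln(r/s)$ the $s$--wave eigenvalue equation becomes a \emph{weighted} problem $-\ddot\varphi - W(t)\varphi = \lambda s^2 e^{2t}\varphi$, so the spectral parameter $\lambda$ no longer sits where Weidl's $\gamma=1/2$ inequality needs it. One cannot simply read off $\sum_j\sqrt{|E_j^{1D}|}$ and identify it with $\sum_j|\ln(\lambda_j s^2)|^{-1}$; these are spectra of different operators. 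The actual proof in \cite{KVW} circumvents this by working instead with the counting function $N(-\Delta-V+\lambda)$ directly, splitting the potential $V=V\chi_{\{|x|<1/\sqrt\lambda\}}+V\chi_{\{|x|\geq 1/\sqrt\lambda\}}$ at the scale set by $\lambda$, and then integrating $F'_s(\lambda)N(\lambda)$ against $\lambda$; the logarithmic weight on $B(s)$ arises from that integration rather than from a change of variables in a 1D eigenvalue problem. Your sketch would need to be reorganised along those lines to close.

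Your derivation of \eqref{radial} from \eqref{first-ineq} in the radial case is correct.
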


\vspace{0.15cm}

\noindent Note that for weak potentials $V$ the estimate
\eqref{first-ineq} reflects the exponential asymptotical
behaviour of the lowest eigenvalue of $-\Delta-V$ established in \cite{Si}.
Since the behaviour of weakly coupled eigenvalues in a layer is 
essentially two-dimensional, the corresponding asymptotics for weakly
deformed layers is again of the exponential type, see \cite{BEGK}.
Our goal thus is to find a similar upper bound for geometrical induced
eigenvalues in quantum layers.


\section{A layer with a geometrical perturbation}
\label{bulges}

Here we apply Theorem \ref{2dim} to obtain the estimates
on the discrete eigenvalues of the Dirichlet Laplacian in a
layer whose width is locally enlarged; 
$$
\Omega_f:=
\{x,y,z\in \R^3\, :\, 0< z < d+ f(x,y)\} ,
$$ 
where $f:\R^2\to [0,\infty)$. We consider the shifted Laplace operator
\begin{equation}
A_f = -\Delta_{\Omega_f} -\frac{\pi^2}{d^2}\quad \text{in} \quad L^2(\Omega_f)
\end{equation}
with the Dirichlet boundary conditions at $\pd\Omega_f$ which is 
associated with the closed quadratic form
\begin{equation} \label{form_f}
Q_f[u] = \int_{\Omega_f}\, \left(|\nabla u|^2 -\frac{\pi^2}{d^2}\,
  |u|^2\right)\, dx
\end{equation}
with the form domain $H^1_0(\Omega_f)$.
From the assumptions on $f$ follows that
$$
\sigma_{ess}\, (A_f) =[0,\infty)\, .
$$
Let us denote by $-\mu_j$ the
non decreasing sequence of negative eigenvalues of $A_f$ taking into
account their multiplicities. We shall estimate the total number of 
$-\mu_j$ by the number of negative eigenvalues of a certain
two-dimensional Schr\"odinger operator $-\Delta-V_f$
with $V_f$ depending on the deformation function $f$.

\begin{Theorem} \label{compare}
Assume that the function $f:\R^2\to\R$ is in $C^2(\R^2)$ 
and such that $\supp f\subset B(R)$ for some $R>0$, and
$\|f\|_\infty < d$.
For any $t\geq 0$ we have
\begin{equation} \label{number}
N_{L^2(\Omega_f)}(A_f-t) \,  \leq \, N_{L^2(\R^2)}(-\Delta+3V_f-3t)\, ,
\end{equation}
where
$$
V_f = \frac{\pi^2}{(d+f)^2}\, -\frac{\pi^2}{d^2}
-b_1|\nabla f|^2\, -b_2(R)\, |\Delta f|^2\, -b_3(R)\, |\nabla f|^4 \, ,
$$
with $b_1, b_2(R)$ and $b_3(R)$ satisfying \eqref{upper}.
\end{Theorem}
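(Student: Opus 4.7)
The plan is to use the min-max principle. $N_{L^2(\Omega_f)}(A_f-t)$ is the supremum of $\dim\mathcal{L}$ over subspaces $\mathcal{L}\subset H^1_0(\Omega_f)$ on which $Q_f[u]-t\|u\|^2<0$, and $N_{L^2(\R^2)}(-\Delta+3V_f-3t)$ has the analogous characterization. I would construct a linear map $\Pi:\mathcal{L}\to L^2(\R^2)$, $u\mapsto\phi_u$, which is injective and whose image lies in the negative subspace of the two-dimensional form $q_f[\phi]:=\int_{\R^2}(|\nabla\phi|^2+3V_f|\phi|^2-3t|\phi|^2)\,dx\,dy$. The natural candidate is the projection onto the transverse Dirichlet ground state adapted to the local thickness: with $L(x,y):=d+f(x,y)$ and $\chi_f(x,y,z):=\sqrt{2/L(x,y)}\sin(\pi z/L(x,y))$, set $\phi_u(x,y):=\int_0^{L(x,y)}u(x,y,z)\chi_f(x,y,z)\,dz$. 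Writing $u=\phi_u\chi_f+w$ with $w(\cdot,\cdot,z)$ pointwise $L^2(0,L(x,y))$-orthogonal to $\chi_f$ gives the decomposition $\|u\|^2_{L^2(\Omega_f)}=\|\phi_u\|^2_{L^2(\R^2)}+\|w\|^2_{L^2(\Omega_f)}$.

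Substituting this decomposition into $Q_f[u]$ and using $\int_0^L|\pd_z\chi_f|^2\,dz=\pi^2/L^2$ together with $\int_0^L\chi_f\pd_z\chi_f\,dz=0$, one reads off the $\phi_u$-diagonal effective potential $V_0:=\pi^2/L^2-\pi^2/d^2$. The longitudinal derivatives of $\chi_f$ (through its dependence on $L$) generate additional contributions carrying factors $\pd_L\chi_f\cdot\nabla f$ and $\pd_L^2\chi_f\cdot(\nabla f)^2+\pd_L\chi_f\cdot\Delta f$, which after $z$-integration become terms proportional to $|\nabla f|^2|\phi_u|^2$, $|\Delta f|^2|\phi_u|^2$, $|\nabla f|^4|\phi_u|^2$, together with cross terms coupling $\phi_u,\nabla\phi_u$ to $w,\nabla w$. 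In the $w$-sector the spectral gap $\int_0^L|\pd_z w|^2\,dz\geq(4\pi^2/L^2)\int_0^L|w|^2\,dz$ combined with $\|f\|_\infty<d$ yields $4\pi^2/L^2-\pi^2/d^2>0$, so the $w$-diagonal of $Q_f[u]-t\|u\|^2$ is bounded below by a strictly positive quadratic form in $w$.

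The core of the proof is then to absorb all $w$-cross terms into this positive $w$-form by Young's inequality $2|ab|\leq\eta a^2+\eta^{-1}b^2$; the residues left over on the $\phi_u$-side produce exactly the additional terms entering $V_f$. Choosing Young parameters so that only $2/3$ of $\int|\nabla\phi_u|^2$ is spent on controlling $\phi$-$w$ cross terms yields the prefactor $1/3$ and hence the factor $3$ in front of $V_f$ and $t$. The resulting estimate
\[
Q_f[u]-t\|u\|^2\ \geq\ \tfrac{1}{3}\int_{\R^2}\bigl(|\nabla\phi_u|^2+3V_f|\phi_u|^2-3t|\phi_u|^2\bigr)\,dx\,dy\ +\ P[w],
\]
with $P[w]\geq 0$, makes $\Pi$ injective on $\mathcal{L}$ (if $\phi_u=0$ then $Q_f[u]\geq t\|u\|^2$, contradicting $u\in\mathcal{L}\setminus\{0\}$) and places its image in the negative subspace of $q_f$, yielding the claim. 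The main obstacle I anticipate is the bookkeeping in this last step: tracking which powers and products of $\nabla f,\Delta f$ arise in each cross term, and tuning the Young parameters so that the effective potential comes out precisely in the stated form with constants $b_1,b_2(R),b_3(R)$ satisfying \eqref{upper}; the $R$-dependence of $b_2,b_3$ should enter through the support condition $\supp f\subset B(R)$, used to control the lower-order geometric expressions by their integrals over $B(R)$.
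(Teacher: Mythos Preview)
Your overall architecture matches the paper's: the same transverse decomposition $\psi=\varphi g+h$ with $\varphi=\chi_f$, the same use of Young's inequality on the cross terms, and the same variational conclusion. The gap is in your treatment of the $w$-sector and, relatedly, in your explanation of where the $R$-dependence of $b_2(R),b_3(R)$ comes from.

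You assert that $P[w]\geq 0$ because the transverse gap $4\pi^2/L^2-\pi^2/d^2$ is strictly positive. That is true pointwise, but it is not enough. After the Young estimates (in particular on the terms obtained from integrating $2\varphi\, g'_x h'_x$ by parts, which produce $2g\,\varphi''_x\, h$), the $w$-form picks up a residue $-a_3^{-1}|w|^2\chi_f$. Since the transverse gap tends to $0$ as $\|f\|_\infty\uparrow d$, it cannot absorb this residue with $a_3$ independent of $f$; if you tried, your constants $b_2,b_3$ would blow up with $\|f\|_\infty$ rather than depend only on $R$. The paper does not rely on the transverse gap on $\supp f$ at all for this step. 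Instead it keeps the $\tfrac13\int|\nabla_{x,y} w|^2$ that remains after the Young splittings and argues in two pieces: for $z\in(d,d+f)$, where $w$ lives over $B(R)$ and vanishes on $\partial B(R)$, positivity follows from the Dirichlet eigenvalue $\lambda(R)$ of $-\Delta_{x,y}$ on $B(R)$ once $a_3\geq\lambda(R)^{-1}$; for $z\in(0,d)$, it uses the fixed gap $3\pi^2/d^2$ \emph{outside} $\supp f$ and a radial Hardy-type inequality (Lemma~\ref{lem1}) to transfer that positivity into $B(R)$, which requires $a_3\geq d^2/(3\pi^2)$. The choice $a_3(R)=\max\{d^2/(3\pi^2),\lambda(R)^{-1}\}$ is what feeds into $b_2(R),b_3(R)$ via \eqref{upper}. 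So the $R$-dependence is not, as you guess, a matter of bounding integrals over $B(R)$; it enters through the Poincar\'e/Dirichlet constant on $B(R)$ needed to make the $w$-form nonnegative.
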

\begin{proof}
We write a given trial function $\psi\in H_0^1(\Omega_f)$ as
\begin{equation} \label{decomp}
\psi(x,y,z) = \varphi(x,y,z)\, g(x,y) + h(x,y,z) \, ,
\end{equation}
where
$$
\varphi(x,y,z) = \sqrt{\frac{2}{d+f(x,y)}}\, \, \sin\left(\frac{\pi
  z}{d+f(x,y)}\right) \,
$$
and
\begin{equation} \label{ort}
\int_0^{d+f(x,y)}\, \varphi(x,y,z)h(x,y,z)\, dz = 0\quad \forall\,
(x,y)\in\R^2\, .
\end{equation}
Hence
\begin{align} \label{qform}
& \int_{\Omega_f}\, \left(|\nabla\psi|^2-\frac{\pi^2}{d^2}\,
  |\psi|^2\right)\, dx\, dy\, dz  = \int_{\Omega_f}\, \big(|\nabla\varphi|^2|g|^2
+|\nabla_{x,y}\, g|^2+|\nabla h|^2   \nonumber \\
& \qquad \qquad-\frac{\pi^2}{d^2}\, (|\varphi\, g|^2+|h|^2)+ 2g g'_x
\varphi'_x\varphi +2g\varphi'_x h'_x+2\varphi g'_x h'_x + 2g
g'_y\varphi'_y\varphi \nonumber \\
& \qquad  \qquad +2g\varphi'_y h'_y+2\varphi g'_y h'_y
+2g\varphi'_z h'_z  \big)\, dx\, dy\, dz\, .
\end{align}
Here and in the sequel we will use the shorthands $u'_x = \frac{\pd
u}{\pd x}$ and analogously for other partial derivatives. 
We estimate all the mixed terms in \eqref{qform},
except for the last two, point-wise in the following way:
\begin{eqnarray} \label{pointwise}
2g\, g'_x\, \varphi'_x\, \varphi & \leq & a_1^{-1}\, |\varphi\, g'_x|^2+a_1\,
|g\varphi'_x|^2\, , \nonumber \\
2g\, g'_y\, \varphi'_y\, \varphi & \leq &a_1^{-1}\, |\varphi\, g'_y|^2+a_1\,
|g\varphi'_y|^2\, , \nonumber \\
2g \varphi'_x\, h'_x & \leq & a_2^{-1}\, |h'_x|^2+a_2\,
|g\varphi'_x|^2\, , \nonumber \\
2g\, \varphi'_y\, h'_y & \leq & a_2^{-1}\, |h'_y|^2+a_2\,
|g\varphi'_y|^2\, ,
\end{eqnarray}
where $a_1$ and $a_2$ are real positive numbers whose values will be
specified later. Furthermore, from integration by parts and
\eqref{ort} follows that
$$
\int_{\Omega_f}\, g\varphi'_z h'_z\, dxdydz = -\int_{\Omega_f}\, g\varphi''_z h
\, dxdydz = 0\, .
$$
integrating by parts again and using \eqref{ort} we can rewrite the last two
terms in \eqref{qform} as
\begin{align*}
\int_{\Omega_f} \, \varphi\, h'_x\, g'_x\, dxdydz & = -\int_{\Omega_f} \,
\varphi'_x\, h\, g'_x \, dxdydz = \int_{\Omega_f} \, g(\varphi''_x\,
h+\varphi'_x\, h'_x)\, dxdydz \, , \\
\int_{\Omega_f} \, \varphi\, h'_y\, g'_y\, dxdydz & = -\int_{\Omega_f} \,
\varphi'_y\, h\, g'_y \, dxdydz = \int_{\Omega_f} \, g(\varphi''_y\,
h+\varphi'_y\, h'_y)\, dxdydz\, .
\end{align*}
The terms $2g \varphi'_x\, h'_x$ and $2g \varphi'_y\, h'_y$ will be
estimated in the same way as in \eqref{pointwise}.
For the rest we use the following point-wise inequalities
\begin{eqnarray}
2g\, \varphi''_x\, h & \leq & a_3\, g^2 |\varphi''_x|^2+a_3^{-1}\,
h^2\, \chi_f\, , \nonumber \\
2g\, \varphi''_y\, h & \leq & a_3\, g^2 |\varphi''_y|^2+a_3^{-1}\,
h^2\, \chi_f\, , \nonumber
\end{eqnarray}
where $\chi_f$ denotes the characteristic function of the support of $f$.
Now we put $a_1=a_2=3$ and arrive at
\begin{eqnarray} \label{mainestim}
&& \int_{\Omega_f}\, \left(|\nabla\psi|^2-\frac{\pi^2}{d^2}\,
  |\psi|^2\right)\, dx\, dy\, dz  \geq \int_{\R^2}\, \left(
\frac 13\, |\nabla_{x,y}\, g|^2 + \tilde{V}_f(x,y) |g|^2\right) \, dx\, dy\, ,
 \nonumber \\
&& \quad + \int_{\Omega_f}\, \left(\frac13\, |\nabla_{x,y}\, h|^2 +|h'_z|^2
 -\frac{\pi^2}{d^2}\, |h|^2-
a_3^{-1}\, |h|^2\, \chi_f\right)\, dx\, dy\, dz
\end{eqnarray}
with
\begin{align*}
\tilde{V}_f & = \frac{\pi^2}{(d+f)^2}\, -\frac{\pi^2}{d^2}
- \int_0^{d+f}\, \left(5\left(|\varphi'_x|^2+|\varphi'_y|^2\right)
+a_3\left(|\varphi''_x|^2+|\varphi''_y|^2\right)\right)\, dz \, .
\end{align*}
Since $h$ satisfies Dirichlet boundary conditions at $\pd\Omega_f$ and
$f< d$, we deduce from \eqref{ort} that
\begin{eqnarray} \label{cylindr}
&& \int_{\Omega_f}\, \left(\frac13\, |\nabla_{x,y}\, h|^2 +|h'_z|^2
 -\frac{\pi^2}{d^2}\, |h|^2-
a_3^{-1}\, |h|^2\, \chi_f\right)\, dx\, dy\, dz \nonumber \\
&& \geq \int_{\Omega_f}\, \left(\frac 13\, |\nabla_{x,y}\, h|^2
  +\left(\frac{4\pi^2}{(d+f)^2}
-\frac{\pi^2}{d^2}\right)\, |h|^2-
a_3^{-1}\, |h|^2\, \chi_f \right)\, dx\, dy\, dz \nonumber \\
&&
\geq \int_0^d\int_{\R^2}\,\left(\frac 13\, |\nabla_{x,y}\, h|^2
+ \frac{3\pi^2}{d^2} \, |h|^2\, -
\left(a_3^{-1}+\frac{3\pi^2}{d^2}\right)
\, |h|^2\, \chi_f \right)\, dx\, dy\, dz  \nonumber \\
&& + \int_d^{2d}\int_{\supp f}\,\left(\frac 13\, |\nabla_{x,y}\, h|^2
 - a_3^{-1}\, |h|^2\,\right)\, dx\, dy\, dz
\end{eqnarray}
From the fact that the support of $f$ is compact it follows that the
last term in 
\eqref{cylindr}
is non-negative for all $a_3 \geq \lambda^{-1}(R)$, where $\lambda(R)$
is the lowest eigenvalue of $-\Delta_{x,y}$ on the disc $B(R)$ with
Dirichlet boundary conditions. Moreover,
the expression on the third line of \eqref{cylindr} can be bounded
from below as follows
\begin{eqnarray} \label{disc}
&&
\int_0^d\int_{\R^2}\,\left(\frac 13\, |\nabla_{x,y}\, h|^2
+ \frac{3\pi^2}{d^2} \, |h|^2\, -
\left(a_3^{-1}+\frac{3\pi^2}{d^2}\right)
\, |h|^2\, \chi_f \right)\, dx\, dy\, dz \\
&& \geq
\int_0^{d} \left(\int_{\R^2}\,
\left(\frac 13\, |\nabla_{r,\theta}\, h|^2
+\frac{3\pi^2}{d^2} \, |h|^2\, \chi_{[R,\infty)}
 - a_3^{-1}\, |h|^2\, \chi_{[0,R]}\right)\, r dr\, d\theta\right)\,
dz\, , \nonumber
\end{eqnarray}
where we have used the polar coordinates $(r,\theta)$ in $\R^2$.
In view of Lemma \ref{lem1}, see Appendix,
\eqref{disc} is positive for $a_3\geq \frac{d^2}{3\pi^2}$. Therefore we
choose
$$
a_3(R) = \max\left\{\frac{d^2}{3\pi^2},\, \lambda^{-1}(R)\right\}\, .
$$
Now it remains to estimate the first term on the right hand side of
\eqref{mainestim}. By a direct calculation we arrive at
\begin{align*}
& \int_0^{d+f}\, \left(5\left(|\varphi'_x|^2+|\varphi'_y|^2\right)
+a_3(R)\, \left(|\varphi''_x|^2+|\varphi''_y|^2\right)\right)\, dz \leq
b_1\, |\nabla f|^2 \\
&  \qquad \qquad \qquad \qquad  \qquad \qquad
+b_2(R)\, |\Delta f|^2+ b_3(R)\, |\nabla f|^4\,
\end {align*}
where $b_1,b_2(R),b_3(R)$ are positive numbers which satisfy
\begin{equation}\label{upper}
b_1\leq \frac{\pi^2}{2\, d^2}\, , \quad 
b_2(R) \leq \frac{a_3(R)\, \pi^2}{d^2}\, , \quad
b_3(R) \leq 4\, a_3(R)\, \left(\frac{\pi^2}{d^4}+\,
  \frac{\pi^4}{5d^2}\right)\, . 
\end{equation}
Finally, combining \eqref{mainestim} and \eqref{ort} we obtain
\begin{eqnarray*}
&& \int_{\Omega_f}\, \left(|\nabla\psi|^2-\frac{\pi^2}{d^2}\,
  |\psi|^2 - t |\psi|^2 \right)\, dx\, dy\, dz  \\
 && \quad \geq \frac 13\, \int_{\R^2}\, \left(
|\nabla_{x,y}\, g|^2 + 3 V_f(x,y) |g|^2 -3t |g|^2\right) \, dx\, dy\, ,
\end{eqnarray*}
holds true for any $t\leq 0$. In view of the variational principle and
the identity
$$
N_{L^2(\R^2)}\left(\frac 13\left(-\Delta+3V_f-3\tau\right)\right)\, = 
N_{L^2(\R^2)}(-\Delta+3V_f-3\tau)
$$
we conclude the proof.
\end{proof}

\begin{Remark} From the assumption $f<d$ it follows that all negative
eigenvalues of $A_f$ come from the first channel only. However, we
would like to mention that this assumption is purely 
is purely technical and could be
replaced by $f< n d, n\in \N$. In that case we would have to use another
decomposition of a test function $\psi$, analogous to \eqref{decomp},
taking into account also the functions associated with higher
transversal modes in $z$. 
For the sake of simplicity we therefore suppose $f<d$.
\end{Remark}

\begin{Corollary} \label{geom}
For any $p>1$ there exist positive constants $C_1$ and $C_p$
such that
\begin{equation} \label{mainineq1}
\sum_j F_s(\mu_j)  \,  \leq\, \, \,  C_1\,
\left\|V_f \ln (\sqrt{x^2+y^2}/s)\right\|_{L^1(B(s))} \,
+ C_p\, \|V_f\|_{L^1(\R_+, L^p(\Sp^1))}\,
\end{equation}
holds for all $s>0$.
\end{Corollary}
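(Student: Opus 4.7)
The plan is to apply the two-dimensional logarithmic Lieb--Thirring estimate of Theorem \ref{2dim} to the effective Schr\"odinger operator $-\Delta+3V_f$ on $L^2(\R^2)$, and then transfer the resulting bound back to the eigenvalues $-\mu_j$ of $A_f$ via the counting-function inequality of Theorem \ref{compare}.

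The first preparatory observation is that $V_f\le 0$ pointwise on $\R^2$: the term $\pi^2/(d+f)^2-\pi^2/d^2$ is non-positive since $f\ge 0$, while the three remaining contributions to $V_f$ enter with a minus sign. Because $f\in C^2(\R^2)$ has compact support in $B(R)$ and $\|f\|_\infty<d$, the potential $W:=-3V_f=3|V_f|$ is non-negative, bounded, and compactly supported; hence the hypotheses of Theorem \ref{2dim} are met for every $p>1$. Let $\{-\lambda_j\}$ denote the (finitely many) negative eigenvalues of $-\Delta-W=-\Delta+3V_f$, ordered by $\lambda_1\ge \lambda_2\ge\cdots>0$.

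Applying the counting inequality \eqref{number} of Theorem \ref{compare} at parameter value $-\mu$, for arbitrary $\mu\ge 0$, gives
\[\#\{j:\mu_j>\mu\}\,=\,N_{L^2(\Omega_f)}(A_f+\mu)\,\le\,N_{L^2(\R^2)}(-\Delta+3V_f+3\mu)\,=\,\#\{k:\lambda_k>3\mu\},\]
and a short min--max argument extracts from this the pointwise comparison $\mu_j\le \lambda_j/3\le \lambda_j$ for every $j$. Since $F_s$ is non-decreasing on $(0,\infty)$ (rising from $0$ to $1$ on $(0,e^{-1}s^{-2}]$ and constantly $1$ afterwards), this yields $\sum_j F_s(\mu_j)\le \sum_j F_s(\lambda_j)$. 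Feeding this into inequality \eqref{first-ineq} applied to $-\Delta-W$ with $W=3|V_f|$, and pulling the factor $3$ outside the two $L^1$-norms, produces \eqref{mainineq1} with $C_1:=3c_1$ and $C_p:=3c_p$. The only non-routine element in this program is the passage from the counting-function inequality to the pointwise comparison $\mu_j\le \lambda_j/3$, which must be phrased carefully but reduces to a standard variational argument; everything else is bookkeeping of constants.
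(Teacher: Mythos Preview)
Your proof is correct and follows the same overall strategy as the paper: reduce to the two-dimensional operator $-\Delta+3V_f$ via Theorem~\ref{compare} and then invoke Theorem~\ref{2dim}. The only difference is in how you pass from the counting inequality to the $F_s$-sum inequality. The paper uses the layer-cake representation
\[
\sum_j F_s(\mu_j)=\int_0^\infty F_s'(t)\,N_{L^2(\Omega_f)}(A_f+t)\,dt
\]
and then inserts \eqref{number} under the integral, using monotonicity of the counting function to dispose of the factor $3$ in $N(-\Delta+3V_f+3t)\le N(-\Delta+3V_f+t)$. You instead extract from \eqref{number} the pointwise comparison $\mu_j\le\lambda_j/3\le\lambda_j$ and then apply monotonicity of $F_s$ termwise. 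These are two standard, equivalent ways of converting a counting-function inequality into an eigenvalue-sum inequality; your route makes the eigenvalue ordering explicit, while the paper's layer-cake argument avoids the min--max step entirely. Both yield the same constants $C_1=3c_1$, $C_p=3c_p$.
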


\begin{proof}
Since $F'_s$ is non-negative we have
\begin{eqnarray*}
\sum_j F_s(\mu_j) & = &\int_0^\infty\, F_s'(t)\,
N_{L^2(\Omega_f)}(A_f-t)\, dt  \\
& \leq &
\int_0^\infty\, F_s'(t)\, N_{L^2(\R^2)}(-\Delta+3V_f-3t)\, dt \\
& \leq &
3\int_0^\infty\, F_s'(t)\, N_{L^2(\R^2)}(-\Delta+3V_f-t)\, dt
= 3 \sum_j F_s(\lambda_j)\, .
\end{eqnarray*}
and the statement follows from Theorem \ref{2dim}.
\end{proof}

\noindent 
The disadvantage of estimate \eqref{mainineq1} is the presence of the
terms in $V_f$ which contain the derivatives of
$f$. Firstly, small oscillations of $f$ will lead to the unnecessary
growth of the right hand side in \eqref{mainineq1}. Secondly, the
deformation function $f$ in general need not be $C^2-$smooth. This can
remedied using the monotonicity property of eigenvalues of Laplace
operators in domains with Dirichlet boundary conditions. Namely, 
for any $\tilde{f}\geq f$ we have 
$$
N_{L^2(\Omega_f)}(A_f-t)  \, \, \leq \, \, 
N_{L^2(\Omega_{\tilde{f}})}(A_{\tilde{f}}-t) \qquad \forall\, t \geq 0\, .
$$

\noindent As an immediate consequence of Theorem \ref{2dim} and 
Corollary \ref{geom} we thus get 

\begin{Theorem} \label{radial}
Let $0\leq f < d$ be a continuous function with support in $B(R)$.
Then there exist constants $C_3$ and $C_4$ such that
\begin{equation} \label{geom-radial}
\sum_j F_s(\mu_j)  \,  \leq\, \, \,  \inf_{\tilde{f}\geq f}\, \left(C_3\,
\left\|V_{\tilde{f}} \ln\frac rs\right\|_{L^1(B(s))} \,
+ C_4\, \|V_ {\tilde{f}}\|_{L^1(\R^2)} \right)\, ,
\end{equation}
where the infimum is taken over all radially symmetric functions
$\tilde{f}\in C^2_0(B(R))$.
\end{Theorem}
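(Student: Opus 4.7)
The plan is to combine three ingredients: domain monotonicity of the Dirichlet Laplacian, the reduction to a two-dimensional Schr\"odinger operator from Theorem \ref{compare}, and the radial version of the logarithmic Lieb--Thirring inequality in Theorem \ref{2dim}.

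First I would set up domain monotonicity. If $\tilde{f}\geq f$ is any admissible majorant, then $\Omega_f\subset\Omega_{\tilde{f}}$, and extension by zero embeds $H^1_0(\Omega_f)$ isometrically into $H^1_0(\Omega_{\tilde{f}})$ while preserving the shifted Dirichlet form. The min-max principle then yields $N_{L^2(\Omega_f)}(A_f-t)\leq N_{L^2(\Omega_{\tilde{f}})}(A_{\tilde{f}}-t)$ for every $t\geq 0$, as already noted in the paragraph preceding the statement of Theorem \ref{radial}. Feeding this into the layer-cake identity used in the proof of Corollary \ref{geom}, and using the monotonicity of $F_s$, I get $\sum_j F_s(\mu_j)\leq \sum_j F_s(\mu_j(\tilde{f}))$ for every such $\tilde{f}$. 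This step already disposes of the weak regularity hypothesis on $f$: it transfers the problem to any smooth majorant I please.

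Next I would restrict to $\tilde{f}\in C^2_0(B(R))$ that are radially symmetric. Then $V_{\tilde{f}}$ is itself radial, since each of its four constituents $\pi^2/(d+\tilde{f})^2-\pi^2/d^2$, $|\nabla\tilde f|^2$, $|\Delta\tilde f|^2$ and $|\nabla\tilde f|^4$ depends only on $|x|$ when $\tilde f$ does. Rerunning the proof of Corollary \ref{geom} verbatim, but invoking the radial estimate of Theorem \ref{2dim} (i.e.\ inequality \eqref{radial}) in place of the general one, produces the bound $C_3\,\|V_{\tilde f}\ln(r/s)\|_{L^1(B(s))}+C_4\,\|V_{\tilde f}\|_{L^1(\R^2)}$ on $\sum_j F_s(\mu_j(\tilde{f}))$. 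Taking the infimum over all admissible radial $\tilde{f}$ then gives \eqref{geom-radial}.

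One ancillary point I would need to verify is that the infimum is over a non-empty set: since $f$ is continuous and compactly supported with $f<d$ pointwise on the compact set $\overline{B(R)}$, we have $\max f<d$, so a sufficiently tall radial $C^2_0$ bump supported in $B(R)$ majorizes $f$ while remaining strictly below $d$. I expect the only genuine point of care to be the justification of the domain monotonicity step in exactly the form used: it hinges on the Dirichlet boundary conditions, which allow zero extension from $H^1_0(\Omega_f)$ into $H^1_0(\Omega_{\tilde f})$, together with the observation that the bulk energy $|\nabla u|^2-\pi^2 d^{-2}|u|^2$ is preserved under this extension since the extended function vanishes on $\Omega_{\tilde f}\setminus\Omega_f$. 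Once these are in place, the theorem is obtained by concatenating the three inequalities.
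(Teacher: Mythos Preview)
Your proposal is correct and matches the paper's approach exactly: the paper states the theorem as an immediate consequence of Theorem~\ref{2dim} and Corollary~\ref{geom}, combined with the Dirichlet domain monotonicity noted just before the statement, and you have simply spelled out these steps (together with the useful remark that the infimum is over a non-empty class and that $V_{\tilde f}$ is radial when $\tilde f$ is).
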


\begin{Remark}
Let us consider the behaviour of the estimate \eqref{mainineq1} for
weakly deformed layers. This means replacing $f$ by $\alpha\,f$ and letting
$\alpha$ go to zero. Theorem \ref{compare} and the result of \cite{KVW} yield
the following upper bound on the number of negative eigenvalues of
$A_{\alpha f}$: 
\begin{equation*}
N_{L^2(\Omega_f)} (A_{\alpha f}) \, \, \leq\, \, 1 + \text{const}\,
\left (\left\|V_{\alpha\tilde{f}}\, \ln\frac rs\right\|_{L^1(\R^2)}
  \, + \|V_{\alpha \tilde{f}}\|_{L^1(\R^2)} \right)\, .  
\end{equation*}
From the explicit form of $V_{\alpha \tilde{f}}$ thus follows that $A_{\alpha
  f}$ has only one negative eigenvalue, $-\mu_1(\alpha)$, for $\alpha$ small
enough. Moreover, inequality \eqref{mainineq1} implies
\begin{equation} 
|\mu_1(\alpha)|\, \, \leq \, \,
\exp\left(-\frac{C(f,d)}{w(\alpha)}\right)\, ,
\end{equation}
where $C(f,d)$ is a positive factor independent of $\alpha$ and 
\begin{equation}
w(\alpha) = \alpha + \mathcal{O}(\alpha^2)\qquad \alpha\to 0\, . 
\end{equation}
This agrees, in order of $\alpha$, with the asymptotics found in \cite{BEGK}.
\end{Remark}

\section*{Appendix}

\begin{Lemma} \label{lem1}
Let $u\in H^1(\R_+, r\, dr)$. Then for any $a>0$ and any $R>0$ the
inequality
\begin{equation} \label{hardy}
a \int_0^R\, |u|^2\, r\, dr \leq a  \int_R^{2R}\, |u|^2\, r\,
dr+ \int_0^{2R}\, |u'|^2\, r\, dr
\end{equation}
holds true.
\end{Lemma}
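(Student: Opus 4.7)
The plan is to prove \eqref{hardy} by integrating by parts against an auxiliary piecewise smooth weight and handling the resulting cross term with Young's inequality. The natural weight is
\[
w(r) = -\tfrac{a}{2}\,r^{2}\quad\text{on }[0,R],\qquad w(r) = \tfrac{a}{2}\bigl(r^{2}-4R^{2}\bigr)\quad\text{on }[R,2R],
\]
which is piecewise $C^{1}$ with $w'(r)=-ar$ on $(0,R)$ and $w'(r)=ar$ on $(R,2R)$, vanishes at both endpoints $r=0$ and $r=2R$, and has a downward jump of size $aR^{2}$ at $r=R$. Splitting $[0,2R]$ at $r=R$ and integrating by parts on each half yields the exact identity
\[
a\!\int_{0}^{R}|u|^{2}\,r\,dr \;-\; a\!\int_{R}^{2R}|u|^{2}\,r\,dr \;+\; aR^{2}|u(R)|^{2} \;=\; -2\!\int_{0}^{2R} w(r)\,\mathrm{Re}\bigl(u(r)\,\overline{u'(r)}\bigr)\,dr,
\]
the endpoint boundary terms vanishing and the jump of $w$ contributing the $aR^{2}|u(R)|^{2}$ on the left.

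Since $aR^{2}|u(R)|^{2}\ge 0$, dropping it produces an upper estimate, and it remains to bound the integral on the right in terms of $\int_{0}^{2R}|u'|^{2}r\,dr$ modulo a remainder absorbable into the $|u|^{2}$ integrals on both sides. A direct pointwise check gives $|w(r)|\le \tfrac{3aR}{2}\,r$ on all of $[0,2R]$, so the task reduces to estimating $3aR\int_{0}^{2R}|u|\,|u'|\,r\,dr$. One application of Young's inequality in the form $2|u|\,|u'|\le \lambda|u|^{2} + \lambda^{-1}|u'|^{2}$ converts this into $\tfrac{3aR\lambda}{2}\int_{0}^{2R}|u|^{2}\,r\,dr + \tfrac{3aR}{2\lambda}\int_{0}^{2R}|u'|^{2}\,r\,dr$, and the first of these contributions is split as $\int_{0}^{R}+\int_{R}^{2R}$ and absorbed back into the two $|u|^{2}$ integrals already present on both sides of the inequality.

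The main obstacle is the careful bookkeeping of constants: the parameter $\lambda$ must be tuned so that, after absorption, the coefficients of $\int_{0}^{R}|u|^{2}r\,dr$ and $\int_{R}^{2R}|u|^{2}r\,dr$ both reduce to $a$, while the coefficient of $\int_{0}^{2R}|u'|^{2}r\,dr$ comes out equal to one. The decisive ingredient is the nonnegative jump contribution $aR^{2}|u(R)|^{2}$ produced by $w$ at $r=R$: it is exactly this term that encodes the discrepancy between the $[0,R]$ and $[R,2R]$ integrals that we wish to control, and its nonnegativity is what makes the rebalancing of the $a$-weighted mass between the two subintervals consistent with the $a$-independent coefficient on $\int_{0}^{2R}|u'|^{2}\,r\,dr$.
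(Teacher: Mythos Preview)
Your integration-by-parts identity is correct (up to an irrelevant sign on the right-hand side), and the pointwise bound $|w(r)|\le \tfrac{3aR}{2}\,r$ is fine. The argument breaks down exactly where you say ``the parameter $\lambda$ must be tuned'': after moving the $|u|^{2}$ contribution across you obtain
\[
a\Bigl(1-\tfrac{3R\lambda}{2}\Bigr)\int_0^R |u|^2\, r\,dr \;\le\; a\Bigl(1+\tfrac{3R\lambda}{2}\Bigr)\int_R^{2R}|u|^2\, r\,dr \;+\; \frac{3aR}{2\lambda}\int_0^{2R}|u'|^2\, r\,dr,
\]
and the two mass coefficients equal $a$ simultaneously only when $\lambda=0$, which sends the gradient coefficient to infinity. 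Re-invoking the jump term $aR^{2}|u(R)|^{2}$ cannot help: it already appeared with the favourable sign, you used that sign to drop it, and there is nothing further to extract.

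No rearrangement can save the stated inequality, because it is actually \emph{false} for large $a$: take $u\equiv 1$ on $[0,R]$, $u(r)=(2R-r)/R$ on $[R,2R]$, $u\equiv 0$ beyond. Then $\int_0^R|u|^2 r\,dr=R^2/2$, $\int_R^{2R}|u|^2 r\,dr=5R^2/12$, $\int_0^{2R}|u'|^2 r\,dr=3/2$, and \eqref{hardy} fails whenever $aR^{2}>18$. The paper's proof proceeds differently---writing $h(r)u(r)=-\int_r^{2R}(hu)'\,dt$ for a linear cutoff $h$ and applying Cauchy--Schwarz before integrating in $r$---but its concluding step ``choose $\alpha^{2}=2R\|h\|^{2}a$'' is circular, since $\|h\|^{2}$ is itself proportional to $\alpha^{2}$; the same counterexample applies. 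What either argument genuinely yields is
\[
a\int_0^R|u|^2\, r\,dr \;\le\; a\int_R^{2R}|u|^2\, r\,dr \;+\; C(aR^{2})\int_0^{2R}|u'|^2\, r\,dr
\]
with a constant depending on $aR^{2}$; your method gives this as soon as you stop forcing $C=1$, and a statement of this type is what the application to \eqref{disc} actually requires.
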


\begin{proof}
Let us define the function $h:\R_+\to \R$ by
$$
h(r)=  \left\{\begin{array}{lll} \alpha & 0< r\leq R\\
 \alpha\left(1-\frac{r-R}{R}\right) & R<r<2R \\
0 & 2R \leq r
\end{array}\right. \, ,
$$
where $\alpha$ is a positive constant. For any $r\in (0,R)$ we then
have
\begin{eqnarray}
\alpha u(r) & =& h(r) u(r) = -\int_r^{2R}\, (hu)'(t)\, dt \\
& =& -\frac{\alpha}{R}\, \int_R^{2R}\, u\, dt - \int_r^{2R}\,
hu'\, dt\nonumber \, .
\end{eqnarray}
The Cauchy-Schwarz inequality thus implies
$$
\alpha^2\, |u(r)|^2 \, \leq \, 2\, \frac{\alpha^2}{R}\, \int_R^{2R}\, |u|^2\, dt
+ 2\|h\|^2\, \int_r^{2R}\,
|u'|^2\, dt\, .
$$
Multiplying by $r$ and integrating over $(0,R)$ we get
$$
\alpha^2\, \int_0^R\, |u|^2\, r\, dr \, \leq \, \alpha^2\,
\int_R^{2R}\, |u|^2\, r\,
dr + 2R\, \|h\|^2\, \int_0^{2R}\, |u'|^2\, r\, dr\, .
$$
To conclude the proof it suffices to choose $\alpha^2= 2R\, \|h\|^2\,
a$.
\end{proof}

\section*{Acknowledgement}
The support from the DFG grant WE 1964/2 is gratefully acknowledged.

%
%
\providecommand{\bysame}{\leavevmode\hbox to3em{\hrulefill}\thinspace}
\providecommand{\MR}{\relax\ifhmode\unskip\space\fi MR }
\providecommand{\MRhref}[2]{%
  \href{http://www.ams.org/mathscinet-getitem?mr=#1}{#2}
}
\providecommand{\href}[2]{#2}
\end{document}